\documentclass[11pt]{article}

\usepackage{amsmath,amsfonts,amsthm,amssymb,bm,tikz-cd}
\usepackage[mathscr]{eucal}
\usepackage[margin=1.1in]{geometry}
\usepackage[english]{babel}
\usepackage{multirow}
\usepackage{authblk}
\usepackage{tikz-cd} 
\usepackage{url}
\usepackage{color}
\usepackage{hyperref}

\usepackage{enumitem}\setlist[enumerate]{label=\arabic*.} 

\theoremstyle{plain}

\newtheorem{corollary}{Corollary}
\newtheorem{lemma}{Lemma}
\newtheorem{proposition}{Proposition}
\newtheorem{problem}{Problem}
\newtheorem{theorem}{Theorem}
\newtheorem{claim}{Claim}
\newtheorem*{acknowledgments}{Acknowledgments}
\theoremstyle{definition}
\newtheorem{definition}{Definition}
\newtheorem{example}{Example}

\newcommand{\ord}[1]{\textup{ord}\left(#1\right)}
\newcommand{\supp}[2]{\textup{supp}_{#1}\left(#2\right)}

\begin{document}

\title{On the order of lazy cellular automata}
\author[1]{Edgar Alcal\'a-Arroyo\footnote{Email: edgar.alcala7434@alumnos.udg.mx }}
\author[1]{Alonso Castillo-Ramirez\footnote{Email: alonso.castillor@academicos.udg.mx }}
\affil[1]{Centro Universitario de Ciencias Exactas e Ingenier\'ias, Universidad de Guadalajara, M\'exico.}

\maketitle

\begin{abstract}
We study the most elementary family of cellular automata defined over an arbitrary group universe $G$ and an alphabet $A$: the \emph{lazy cellular automata}, which act as the identity on configurations in $A^G$, except when they read a \emph{unique active transition} $p \in A^S$, in which case they write a fixed symbol $a \in A$. As expected, the dynamical behavior of lazy cellular automata is relatively simple, yet subtle questions arise since they completely depend on the choice of $p$ and $a$. In this paper, we investigate the \emph{order} of a lazy cellular automaton $\tau : A^G \to A^G$, defined as the cardinality of the set $\{ \tau^k : k \in \mathbb{N} \}$. In particular, we establish a general upper bound for the order of $\tau$ in terms of the fibers of the pattern $p$, and we prove that this bound is attained when $p$ is a quasi-constant pattern.    \\

\textbf{keywords}: Lazy cellular automaton; unique active transition; order of a cellular automaton; quasi-constant pattern.
\end{abstract}


\section{Introduction}

A cellular automaton (CA) is a mathematical model over a discrete space defined by a local map that is applied homogeneously and simultaneously to the whole space. The underlying discrete space is a \emph{configuration space} $A^G$, which consists of all maps from a group universe $G$ to an alphabet $A$. Following \cite{CSC10}, a cellular automaton $\tau : A^G \to A^G$ is a function such that there exist a finite subset $S \subseteq G$, called a \emph{neighborhood} of $\tau$, and a \emph{local map} $\mu :A^S \to A$ satisfying
\[ \tau(x)(g) = \mu( (g\cdot x) \vert_S), \quad \forall \ x \in A^G, g \in G, \]    
where $\cdot$ denotes the \emph{shift action} of $G$ on $A^G$:
\[  (g \cdot x)(h):= x(hg), \quad \forall h \in G. \]

We say that a cellular automaton $\tau : A^G \to A^G$ is \emph{lazy} if there is a local defining map $\mu : A^S \to A$ for $\tau$ such that $e \in S$, where $e$ is the identity of the group $G$, and there exists a pattern $p \in A^S$, called the \emph{unique active transition} of $\tau$, satisfying the following: 
\[ \forall z \in A^S, \quad \mu(z) = z(e) \Longleftrightarrow z \neq p. \]
Intuitively, a lazy cellular automaton acts almost as the identity function of $A^G$, except when it reads the pattern $p$, in which case it writes the symbol $a:=\mu(p) \in A \setminus \{ p(e) \}$.  

As expected, the dynamical behavior of a lazy CA is relatively simple, yet subtle questions arise because their evolution depends entirely on the choice of $p$ and $a$. In a certain sense, lazy cellular automata are even more elementary than the well-known \emph{elementary cellular automata} (ECA) studied by Wolfram \cite{Wolfram}, where complex behavior already emerges. Since there are only 256 ECA, a complete case-by-case computational analysis is feasible, whereas there are infinitely many lazy CAs (over an infinite group universe), as their neighborhood size can be arbitrarily large. Despite this, a deep understanding of lazy CAs is possible and may lead to new insights into broader families of cellular automata. 

Lazy cellular automata were introduced in \cite{Idem} as a tool to study \emph{idempotent} CAs: this is, cellular automata $\tau : A^G \to A^G$ satisfying $\tau^2 = \tau$. It was observed that if the unique active transition $p \in A^S$ is constant (i.e., $p(e)=p(s)$, $\forall s \in S$) or symmetric (i.e., $S=S^{-1}$ and $p(s)=p(s^{-1})$, $\forall s \in S$), then $\tau$ is idempotent. Moreover, the idempotence of $\tau$ was completely characterized when $p$ is \emph{quasi-constant}, meaning that there is $r \in S$ such that $p \vert_{S \setminus \{r\}}$ is constant.    

Here we study the \emph{order} of a lazy cellular automaton $\tau : A^G \to A^G$, denoted by $\ord{\tau}$, as the cardinality of the set of all powers of $\tau$:
\[ \ord{\tau} := \vert \{ \tau^k : k \in \mathbb{N} \} \vert. \]
 Besides being an important algebraic concept, the order also captures part of the dynamical behavior of a cellular automaton. For example, it was shown by K\r{u}rka \cite[Theorem 4]{Kurka} that a \emph{one-dimensional} cellular automaton (i.e, when the underlying universe is the group of integers $\mathbb{Z}$) is of finite order if and only if it is equicontinuous.

The dynamical behavior of one-dimensional lazy CAs was examined in \cite{One-dim}. In this case, when the neighborhood $S \subseteq \mathbb{Z}$ of $\tau$ is an interval, an interesting dichotomy arises: either $\tau$ is idempotent or of infinite order, which is equivalent to being strictly almost equicontinuous as a dynamical system.    
 
In this paper, we establish several results on the order of lazy CAs in the general setting of an arbitrary group universe $G$. In Section 2, we show that for any lazy cellular automaton $\tau : A^G \to A^G$, if $\ord{\tau}$ is finite, then $\tau$ has period $1$, which means that there is $n \in \mathbb{N}$ such that $\tau^n = \tau^{n+1}$. Moreover, in Theorem \ref{upperbound}, we derive a general upper bound for $\ord{\tau}$ which depends purely on combinatorial properties of the fibers $S_a := p^{-1}\{a\} \subseteq S$ of of its unique active transition $p \in A^S$. As a consequence, in Corollary \ref{cor-idem}, we provide a sufficient condition on $p \in A^S$ that guarantees that $\tau$ is idempotent. 

In Section 3, we determine the order of a lazy CA whose unique active transition is a quasi-constant pattern, which is a substantial generalization of Theorem 2 in \cite{Idem}. 

As an easy consequence of Theorem \ref{th-main} we find that, in contrast with the dichotomy obtained in \cite{One-dim} when the neighborhood $S \subseteq \mathbb{Z}$ is an interval, for every $n \geq 2$, there exists a lazy cellular automaton $\tau : A^\mathbb{Z} \to A^\mathbb{Z}$ such that $\ord{\tau} =n$. 

Finally, in Section 3, we present two open problems related to the study of lazy cellular automata.


\section{The order of lazy CA}

We assume that the alphabet $A$ has at least two different elements $0$ and $1$. A \emph{pattern} is a function $p \in A^S$, where $S$ is a finite subset of the group universe $G$. We denote patterns by $p=p(s_1)p(s_2)\dots p(s_n)$, where we fix an order on the set $S = \{ s_1, s_2, \dots, s_n\}$. 

The term \emph{active transitions} of a local map $\mu : A^S \to A$, with $e \in S$, has been recently introduced by several authors \cite{Pedro1, Pedro2, Fates1, Fates2} as the patterns $z \in A^S$ such that $\mu(z) \neq z(e)$. The \emph{activity value} \cite{Concha} of $\mu : A^S \to A$ is simply the number of active transitions of $\mu$. 

\begin{definition}\label{lazy}
A cellular automaton $\tau: A^G \to A^G$ is called \emph{lazy} if there is a local defining map $\mu : A^S \to A$ for $\tau$ such that $e \in S$ and there exists $p \in A^S$ satisfying 
\[ \forall z \in A^S, \quad \mu(z) = z(e) \Longleftrightarrow z \neq p. \]
In such case, we say that $p$ is the \emph{unique active transition} of $\tau$ and $a:=\mu(p) \in A \setminus \{p(e)\}$ is the \emph{writing symbol} of $\tau$.
\end{definition}

The adjective ``lazy'' is justified for this class of cellular automata as they have the smallest non-zero activity value (i.e., activity value $1$). 

\begin{example}\label{ex-236}
Let $G=\mathbb{Z}$ and $A=\{0,1\}$. The ECA rule 236 (see \cite[Sec. 2.5]{Kari} for an explanation of the rule labeling) is lazy, because it may be defined via the local map $\mu : A^{\{-1,0,1 \}} \to A$ given by the following table:
\[\begin{array}{c|cccccccc}
z \in A^{\{-1,0,1\}} & 111 & 110 & 101 & 100 & 011 & 010 & 001 & 000 \\ \hline
\mu(z) \in A & 1 & 1 & 1 & 0 & 1 & 1 & 0 & 0
\end{array}\]
The unique active transition of rule 236 is $p=101 \in A^{\{-1,0,1\}}$. 
\end{example}

\begin{example}
Let $G=\mathbb{Z}$ and $A=\{0,1\}$. The ECA rule 136 may be defined via the local map $\mu : A^{\{-1,0,1 \}} \to A$ given by the following table
\[\begin{array}{c|cccccccc}
z \in A^{\{-1,0,1\}} & 111 & 110 & 101 & 100 & 011 & 010 & 001 & 000 \\ \hline
\mu(z) \in A & 1 & 0 & 0 & 0 & 1 & 0 & 0 & 0
\end{array}\]
It may seem that rule 136 is not lazy because the above local map has two active transitions $110$ and $010$. However, this local map may be reduced to its minimal neighborhood, so we obtain a local defining map $\mu^\prime : A^{\{0,1\}} \to A$ for rule 136 given by the following table:
\[\begin{array}{c|cccccccc}
z \in A^{\{0,1\}} & 11 & 10 & 01 & 00  \\ \hline
\mu^\prime(z) \in A & 1 & 0 & 0 & 0 
\end{array}\]
 Therefore, rule 136 is lazy and has unique active transition $p = 10 \in A^{\{0,1\}}$. 
\end{example}

Recall that the \emph{minimal neighborhood} of a cellular automaton $\tau : A^G \to A^G$ is a neighborhood admitted by $\tau$ of smallest cardinality, which always exists and is unique by \cite[Proposition 1.5.2]{CSC10}. Equivalently, the minimal neighborhood of $\tau$ is equal to the set of all elements of $G$ that are essential in order to define a local map for $\tau$ (see \cite[Proposition 1]{CV24}). We call the \emph{minimal local map} of $\tau$ to the local defining map of $\tau$ associated to its minimal neighborhood.    

It was shown in \cite[Lemma 1]{Idem} that if $\mu : A^S \to A$ has a unique active transition and $\vert S \vert \geq 2$, then the minimal neighborhood of the lazy cellular automaton $\tau : A^G \to A^G$ defined by $\mu$ is precisely $S$. Hence, the next result follows. 

\begin{proposition}
Let $\tau : A^G \to A^G$ be a non-constant cellular automaton with minimal local map $\mu : A^S \to A$. Then, $\tau$ is lazy if and only if $e \in S$ and $\mu$ has a unique active transition. 
\end{proposition}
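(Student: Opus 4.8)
The backward implication is immediate from the definitions. Suppose $e \in S$ and the minimal local map $\mu : A^S \to A$ has a unique active transition $p \in A^S$. By definition of active transition, $\mu(p) \neq p(e)$ while $\mu(z) = z(e)$ for every $z \neq p$; that is, $\mu(z) = z(e) \Longleftrightarrow z \neq p$ for all $z \in A^S$. Since $\mu$ is in particular a local defining map for $\tau$ and $e \in S$, Definition \ref{lazy} gives at once that $\tau$ is lazy.

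For the forward implication, assume $\tau$ is lazy and fix, via Definition \ref{lazy}, a local defining map $\mu' : A^{S'} \to A$ for $\tau$ with $e \in S'$ and a unique active transition $p' \in A^{S'}$. The strategy is to show that $S'$ must already be the minimal neighborhood of $\tau$: once we know $S = S'$, uniqueness of the local defining map attached to a fixed neighborhood forces $\mu = \mu'$, and then $\mu$ has $e$ in its domain and a unique active transition, as desired. If $|S'| \geq 2$, the fact that $S'$ is the minimal neighborhood is exactly \cite[Lemma 1]{Idem}. If $|S'| = 1$, then $S' = \{e\}$; since $\tau$ is non-constant its minimal neighborhood $S$ is non-empty, and since the minimal neighborhood is contained in every neighborhood of $\tau$ (it is the set of essential elements; see \cite[Proposition 1.5.2]{CSC10} and \cite[Proposition 1]{CV24}), we get $\emptyset \neq S \subseteq S' = \{e\}$, hence $S = S' = \{e\}$.

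The only delicate point is the singleton case $|S'| = 1$, which is not covered by the hypothesis $|S| \geq 2$ of \cite[Lemma 1]{Idem} and genuinely occurs — for instance, when $|A| \geq 3$, a self-map of $A$ with exactly one non-fixed point defines a non-constant lazy cellular automaton with neighborhood $\{e\}$ — and it is precisely why the non-constancy hypothesis on $\tau$ cannot be dropped. Everything else is a routine unwinding of definitions, so I expect no substantial obstacle.
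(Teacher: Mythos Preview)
Your proof is correct and follows the same route as the paper: both directions hinge on \cite[Lemma 1]{Idem}, which says that a local map with a unique active transition and $|S'| \geq 2$ already sits on the minimal neighborhood. The paper's argument is terser --- it simply states that the proposition follows from that lemma --- whereas you additionally spell out the singleton case $|S'| = 1$ and explain why the non-constancy hypothesis is needed there; this is a welcome clarification rather than a different approach.
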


We say that a pattern $p \in A^S$ \emph{appears} in $x \in A^G$ if there is $g \in G$ such that $(g \cdot x)\vert_S = p$. 

\begin{lemma}\label{le-equality}
Let $\tau: A^G\to A^G$ be a lazy CA with unique active transition $p \in A^S$. Then, $p$ appears in a configuration $x \in A^G$ if and only if $x \neq \tau(x)$.
\end{lemma}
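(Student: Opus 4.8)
The plan is to prove the biconditional directly from the definition of a lazy cellular automaton, using the fact that $\tau$ differs from the identity at a point $g$ precisely when the active transition is read at $g$.

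First I would unwind the definitions. By hypothesis $\tau$ has a local defining map $\mu : A^S \to A$ with $e \in S$, unique active transition $p \in A^S$, and writing symbol $a = \mu(p)$, so that $\mu(z) = z(e)$ for all $z \neq p$ and $\mu(p) = a \neq p(e)$. For a fixed configuration $x \in A^G$ and fixed $g \in G$, the value $\tau(x)(g) = \mu((g\cdot x)\vert_S)$. Setting $z := (g\cdot x)\vert_S \in A^S$, I observe that $z(e) = (g\cdot x)(e) = x(g)$. Hence, by the defining property of $\mu$, we have $\tau(x)(g) \neq x(g)$ if and only if $\mu(z) \neq z(e)$, which happens if and only if $z = p$, i.e. $(g\cdot x)\vert_S = p$.

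Then I would assemble the global statement. The configurations $x$ and $\tau(x)$ differ (that is, $x \neq \tau(x)$) if and only if there exists some $g \in G$ with $\tau(x)(g) \neq x(g)$; by the pointwise computation above this is equivalent to the existence of $g \in G$ with $(g\cdot x)\vert_S = p$, which is exactly the statement that $p$ appears in $x$. This completes the proof.

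I do not expect any serious obstacle here; the only point requiring a little care is the bookkeeping identity $z(e) = x(g)$, which follows immediately from the definition of the shift action $(g\cdot x)(h) = x(hg)$ evaluated at $h = e$, together with the requirement $e \in S$ that is built into the definition of a lazy CA (so that $z(e)$ makes sense). Everything else is a direct translation between ``pointwise disagreement'' and ``$p$ is read somewhere.''
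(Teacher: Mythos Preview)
Your proof is correct and follows essentially the same route as the paper: both reduce the biconditional to the pointwise observation that $\tau(x)(g) \neq x(g)$ if and only if $(g\cdot x)\vert_S = p$, using $z(e) = (g\cdot x)(e) = x(g)$ and the defining property of the unique active transition. The paper compresses this into a single displayed line, while you have spelled out the bookkeeping more explicitly, but the argument is identical.
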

\begin{proof}
 By Definition \ref{lazy}, there is $g \in G$ such that $(g \cdot x)\vert_S = p$ if and only if 
\[ \tau(x)(g) = \mu((g \cdot x)\vert_S)  \neq p(e) = (g \cdot x)(e) = x(g),  \]
where $\mu : A^S \to A$ is the minimal local map of $\tau$. The result follows. 
\end{proof}

For any cellular automaton $\tau : A^G \to A^G$ and $n \in \mathbb{N}$, we denote by $\tau^n$ the $n$-th composition of $\tau$ with itself: 
\[ \tau^n = \underbrace{\tau \circ \dots \circ\tau}_{n \text{ times }}. \]
By \cite[Prop. 1.4.9]{CSC10}, $\tau^n : A^G \to A^G$ is also a cellular automaton.

\begin{corollary}\label{cor-power}
Let $\tau: A^G\to A^G$ be a lazy CA with unique active transition $p \in A^S$. For any $n \in \mathbb{N}$, $\tau^{n} \neq \tau^{n+1}$ if and only if there exists $x \in A^G$ such that $p$ appears in $\tau^{n}(x)$.  
\end{corollary}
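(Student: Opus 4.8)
The plan is to reduce the inequality $\tau^n \neq \tau^{n+1}$ to a statement about a single configuration, and then apply Lemma \ref{le-equality} with the lazy CA $\tau$ evaluated at the configuration $\tau^n(x)$. First I would observe that $\tau^n = \tau^{n+1}$ as maps $A^G \to A^G$ if and only if $\tau(\tau^n(x)) = \tau^n(x)$ for every $x \in A^G$, simply because $\tau^{n+1}(x) = \tau(\tau^n(x))$ by definition of composition. Hence $\tau^n \neq \tau^{n+1}$ holds precisely when there exists some $x \in A^G$ with $\tau(\tau^n(x)) \neq \tau^n(x)$.

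Next I would apply Lemma \ref{le-equality} to the configuration $y := \tau^n(x) \in A^G$: since $\tau$ is a lazy CA with unique active transition $p \in A^S$, the lemma gives $y \neq \tau(y)$ if and only if $p$ appears in $y$. Substituting back $y = \tau^n(x)$, this says that $\tau^n(x) \neq \tau^{n+1}(x)$ if and only if $p$ appears in $\tau^n(x)$. Combining this equivalence with the reduction of the previous paragraph yields the claim: $\tau^n \neq \tau^{n+1}$ if and only if there exists $x \in A^G$ such that $p$ appears in $\tau^n(x)$.

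I do not expect any genuine obstacle here, since the statement is essentially Lemma \ref{le-equality} applied pointwise along the orbit; the only thing to be careful about is the logical bookkeeping, namely that the existential quantifier on $x$ passes correctly through the negation (the negation of ``$\tau^n$ and $\tau^{n+1}$ agree on all $x$'' is ``they disagree on some $x$''), which is routine. One should also note that $\tau^n$ is itself a cellular automaton by \cite[Prop. 1.4.9]{CSC10}, so that evaluating it at arbitrary configurations and feeding the result into $\tau$ is legitimate, but no extra hypothesis is needed.
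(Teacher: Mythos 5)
Your argument is correct and is exactly the route the paper intends: the corollary is stated as an immediate consequence of Lemma~\ref{le-equality} applied to the configuration $\tau^n(x)$, which is precisely what you do (with the quantifier bookkeeping handled correctly). No issues.
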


For a pattern $p \in A^S$, we denote the \emph{fiber} of $a \in A$ under $p$ by
 \[S_a := p^{-1}\{a\} =  \{ s\in S : p(s) = a\}. \]
 The next result follows by \cite[Lemma 3]{Idem} or \cite[Lemma 3.2]{One-dim}, but we add its proof here for completeness.

\begin{lemma}\label{le-ale}
Let $\tau: A^G\to A^G$ be a lazy CA with unique active transition $p \in A^S$ and writing symbol $a \in A$. If there exists $x \in A^G$ such that $\tau(x) \vert_S = p$, then $(s \cdot x)\vert_S = p$ for some $s \in S_a$.
\end{lemma}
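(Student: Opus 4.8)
The plan is to unwind the definition of $\tau$ and to exploit the two-case description of the lazy local map. Write $\mu : A^S \to A$ for the minimal local map of $\tau$; then the hypothesis $\tau(x)\vert_S = p$ says precisely that $\mu((s \cdot x)\vert_S) = \tau(x)(s) = p(s)$ for every $s \in S$. I would first apply this at the coordinate $s = e$: if $x\vert_S = p$, then $\mu(x\vert_S) = \mu(p) = a$, but $a \neq p(e)$ by the definition of the writing symbol, contradicting $\mu(x\vert_S) = p(e)$. Hence $x\vert_S \neq p$, and by laziness $\mu(x\vert_S) = x(e)$, which incidentally forces $x(e) = p(e)$.

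Next I would argue by contradiction. Suppose $(s \cdot x)\vert_S \neq p$ for every $s \in S \setminus \{e\}$. Combined with $x\vert_S = (e\cdot x)\vert_S \neq p$ from the previous step, this gives $(s\cdot x)\vert_S \neq p$ for \emph{all} $s \in S$, so laziness yields $\mu((s\cdot x)\vert_S) = (s\cdot x)(e) = x(s)$ for each such $s$. Comparing with $\mu((s\cdot x)\vert_S) = p(s)$, we conclude $x(s) = p(s)$ for all $s \in S$, i.e.\ $x\vert_S = p$, contradicting $x\vert_S \neq p$.

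Therefore there exists $s \in S \setminus \{e\}$ with $(s\cdot x)\vert_S = p$, and for this $s$ we obtain $p(s) = \mu((s\cdot x)\vert_S) = \mu(p) = a$, while $a \neq p(e)$ by definition; this is exactly the assertion. The whole argument is really just careful bookkeeping with the dichotomy ``$\mu(z) = z(e)$ unless $z = p$''. The only point that needs a little attention is treating the coordinate $e \in S$ separately, since there $(e\cdot x)\vert_S = x\vert_S$ plays a distinguished role: it is precisely the observation that $x\vert_S$ cannot equal $p$ that forces $p$ to be read at some nontrivial shift $s \neq e$, and hence (since $\mu(p) = a$) that $p(s) = a$ there.
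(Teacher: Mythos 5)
Your proposal is correct and follows essentially the same argument as the paper: rule out $x\vert_S = p$ using $\mu(p) = a \neq p(e)$, then derive a contradiction from assuming $(s\cdot x)\vert_S \neq p$ for all $s \in S\setminus\{e\}$ (which would force $x\vert_S = p$), and finally read off $p(s) = \mu(p) = a$ at the shift where $p$ appears. The only difference is that you spell out the inclusion of the coordinate $s = e$ slightly more explicitly, which the paper leaves implicit.
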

\begin{proof}
The hypothesis $\tau(x) \vert_S = p$ is equivalent to 
\begin{equation}\label{hip-ale}
p(s) = \tau(x)(s) = \mu((s \cdot x) \vert_S), \quad \forall s \in S,  
\end{equation}
where $\mu : A^S \to A$ is the minimal local map of $\tau$. 

If $x \vert_S = p$, then $p(e) = \mu(x\vert_S) = a$ is a contradiction with Definition \ref{lazy}. Hence, $x \vert_S \neq p$ and $p(e) = \mu(x\vert_S) = x(e)$ by the hypothesis (\ref{hip-ale}) with $s=e$. Suppose that $(s \cdot x) \vert_S \neq p$ for all $s \in S \setminus \{e\}$. Then, by hypothesis (\ref{hip-ale}),
\[ p(s) = \mu((s \cdot x) \vert_S) = (s \cdot x)(e) = x(s), \quad \forall s \in  S \setminus \{e\}, \]
which contradicts that $x \vert_S \neq p$. Therefore, there must exist $s \in S \setminus \{e\}$ such that $(s \cdot x)\vert_S = p$. Again hypothesis (\ref{hip-ale}) gives us that that $p(s) = \mu((s \cdot x) \vert_S) = a$, so $s \in S_a$. 
\end{proof}

Given $x\in A^G$ and $b\in A$, define
\[\supp{b}{x} := \{g\in G : x(g) = b\}.\]
The following result presents some elementary properties of supports of configurations under lazy CAs. 

\begin{lemma}\label{supports}
    Let $\tau: A^G\to A^G$ be a lazy CA with unique active transition $p \in A^S$ and writing symbol $a \in A$. Let $i,j \in\mathbb{N}, i\leq j$. Then:
   \begin{enumerate} 
    \item $\supp{a}{\tau^i(x)}\subseteq \supp{a}{\tau^{j}(x)}$ for all $x \in A^G$. \label{L3-P1}
    \item  $\supp{p(e)}{\tau^{i}(x)}\supseteq \supp{p(e)}{\tau^{j}(x)}$ for all $x \in A^G$. Furthermore, 
    \[\supp{b}{\tau^{i}(x)}= \supp{b}{\tau^{j}(x)},\]
    for all $x \in A^G$, and $b\in A\setminus\{a, p(e)\}$. \label{L3-P2}
    \item If $ \supp{a}{x} =  \supp{a}{\tau(x)}$ for some $x \in A^G$, then $x= \tau(x)$. \label{L3-P3}
\end{enumerate}    
  \end{lemma}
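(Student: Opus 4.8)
The plan is to exploit the single fact that a lazy CA changes a configuration only where the pattern $p$ appears, and when it does change a cell it writes exactly the symbol $a$ there. First I would prove the key one-step statement: for every $x \in A^G$ and every $g \in G$, if $\tau(x)(g) \neq x(g)$ then $\tau(x)(g) = a$ and $x(g) = p(e) \neq a$. This is immediate from Definition \ref{lazy}, since $\tau(x)(g) = \mu((g\cdot x)|_S)$ differs from $(g\cdot x)(e) = x(g)$ precisely when $(g\cdot x)|_S = p$, in which case $\tau(x)(g) = \mu(p) = a$. Consequently $\supp{a}{x} \subseteq \supp{a}{\tau(x)}$ (a cell carrying $a$ cannot be overwritten, because overwriting it would change its value but $\tau$ can only write $a$), and dually $\supp{b}{\tau(x)} \subseteq \supp{b}{x}$ for every $b \neq a$ (a cell that does not carry $a$ after applying $\tau$ was never overwritten, so it had the same value before).

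For parts \ref{L3-P1} and \ref{L3-P2} I would then iterate: writing $j = i + (j-i)$ and applying the one-step inclusion $(j-i)$ times to the configuration $\tau^i(x)$ gives $\supp{a}{\tau^i(x)} \subseteq \supp{a}{\tau^{i+1}(x)} \subseteq \cdots \subseteq \supp{a}{\tau^{j}(x)}$, and symmetrically for $b \neq a$. (Formally this is a trivial induction on $j-i$, with the base case $i=j$ being equality.)

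For part \ref{L3-P3}, suppose $\supp{a}{x} = \supp{a}{\tau(x)}$ but $x \neq \tau(x)$. By Lemma \ref{le-equality} the pattern $p$ appears in $x$, so there is $g \in G$ with $(g\cdot x)|_S = p$; then $\tau(x)(g) = a$ while $x(g) = p(e) \neq a$ by the one-step statement. Thus $g \in \supp{a}{\tau(x)}$ but $g \notin \supp{a}{x}$, contradicting the hypothesis. Hence $x = \tau(x)$.

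The argument is entirely routine; the only point requiring a little care is making the one-step "only $a$ can be written, and only onto non-$a$ cells" observation precise, since both monotonicity statements and part \ref{L3-P3} rest on it. There is no real obstacle here beyond bookkeeping with the definitions.
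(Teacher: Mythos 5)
Your proposal is correct and follows essentially the same route as the paper: the one-step observation that $\tau$ can only write $a$, and only onto cells not already carrying $a$, gives both monotonicity statements, which are then iterated by induction on $j-i$. The only cosmetic difference is in part \ref{L3-P3}, where you argue by contradiction via Lemma \ref{le-equality} while the paper argues directly by checking cells inside and outside $\supp{a}{\tau(x)}$; both rest on the same fact and are equally valid.
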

\begin{proof}
   For parts (1.) and (2.), we shall prove the base case $i = 0$ and $j = 1$, as the rest follows by induction.  Let $g \in\supp{a}{x}$. Observe that $(g \cdot x)\vert_S \neq p$, because $(g \cdot x)(e) = x(g) = a \neq p(e)$. Definition \ref{lazy} implies
    \[ \tau(x)(g) = \mu( ( g \cdot x) \vert_S) = ( g \cdot x)(e) = x(g) = a,  \]
    where $\mu : A^S \to A$ is the minimal local map of $\tau$. This shows that $g \in \supp{a}{\tau(x)}$. Now, assume $b\in A\setminus\{a\}$ and $g\in \supp{b}{\tau(x)}$. Since $b\neq a$, Definition \ref{lazy} implies $\tau(x)(g) = x(g)$, so $g\in \supp{b}{x}$. Similarly, if $b\in A\setminus\{a, p(e)\}$ and $g\in\supp{b}{x}$, then $(g\cdot x)\vert_S \neq p$. As a consequence, $\tau(x)(g) = x(g)$ and $g\in\supp{b}{\tau(x)}$.

    For part (3), first note that $x(g) = a = \tau(x)(g)$ for all $g \in \supp{a}{x} = \supp{a}{\tau(x)}$. Let $h \in G \setminus \supp{a}{\tau(x)}$. This means that $\mu( (h \cdot x) \vert_S) = \tau(x)(h) \neq a$, so by Definition \ref{lazy}, $(h \cdot x) \vert_S \neq p$ and 
    \[ \tau(x)(h)  = \mu( (h \cdot x) \vert_S) = (h \cdot x)(e) = x(h). \] 
    Therefore, $x= \tau(x)$.
\end{proof}

If $\ord{\tau} < \infty$, let $m\in \mathbb{N}$ and $n\in\mathbb{Z}_+$ be as small as possible such that $\tau^m = \tau^{m + n}$. Then, $m$ and $n$ are referred to as the \emph{index} and \emph{period} of $\tau$, respectively. Note that $\ord{\tau} = m + n$, as depicted in the following diagram: 

\[\begin{tikzcd}[column sep=large]
\mathrm{id} \arrow[r] 
  & \tau \arrow[r] 
   & \cdots \arrow[r] 
  & \tau^m \arrow[r] 
  & \cdots \arrow[r] 
  & \tau^{m+n-1}
  \arrow[ll, bend right=35]
\end{tikzcd} \]

\begin{proposition}\label{period}
    Let $\tau: A^G\to A^G$ be a lazy CA. If $\ord{\tau} < \infty$, then the period of $\tau$ is 1.
\end{proposition}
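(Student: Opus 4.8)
The plan is to exploit the monotonicity of the $a$-supports established in Lemma \ref{supports}. Since $\ord{\tau} < \infty$, there exist $m \in \mathbb{N}$ and $n \in \mathbb{Z}_+$ with $\tau^m = \tau^{m+n}$; it suffices to show that then $\tau^m = \tau^{m+1}$, because this forces the period (the least such positive integer) to equal $1$. So I would fix an arbitrary $x \in A^G$ and work with the single orbit $x, \tau(x), \tau^2(x), \dots$.

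First I would write down the chain of inclusions coming from Lemma \ref{supports}(\ref{L3-P1}):
\[ \supp{a}{\tau^m(x)} \subseteq \supp{a}{\tau^{m+1}(x)} \subseteq \cdots \subseteq \supp{a}{\tau^{m+n}(x)}. \]
Because $\tau^m = \tau^{m+n}$, the first and last sets in this chain coincide, hence every inclusion is an equality; in particular $\supp{a}{\tau^m(x)} = \supp{a}{\tau^{m+1}(x)}$. Next I would apply Lemma \ref{supports}(\ref{L3-P3}) to the configuration $y := \tau^m(x)$: from $\supp{a}{y} = \supp{a}{\tau(y)}$ we get $y = \tau(y)$, i.e. $\tau^m(x) = \tau^{m+1}(x)$. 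Since $x$ was arbitrary, $\tau^m = \tau^{m+1}$, and therefore the period of $\tau$ is $1$.

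I do not expect a genuine obstacle here: the whole argument is a short deduction from the three parts of Lemma \ref{supports}, whose proofs are already in place. The only point requiring a little care is the bookkeeping at the start — namely, translating ``$\ord{\tau}$ is finite'' into the existence of some $m$ and some positive $n$ with $\tau^m = \tau^{m+n}$, and then observing that once we know $\tau^m = \tau^{m+1}$ for some $m$, the minimality in the definition of period immediately gives period $1$. Everything else (the telescoping equality of supports, the invocation of part (\ref{L3-P3})) is routine.
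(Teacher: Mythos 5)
Your proposal is correct and follows essentially the same argument as the paper: use the monotonicity in Lemma \ref{supports}(\ref{L3-P1}) together with $\tau^m=\tau^{m+n}$ to force $\supp{a}{\tau^m(x)}=\supp{a}{\tau^{m+1}(x)}$, then apply Lemma \ref{supports}(\ref{L3-P3}) to $\tau^m(x)$ to get $\tau^m=\tau^{m+1}$ and hence period $1$. No issues.
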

\begin{proof} 
Let $\mu : A^S \to A$ be the minimal local map of $\tau$ with unique active transition $p \in A^S$ and writing symbol $a:=\mu(p) \in A$. If $\ord{\tau} < \infty$, let $m$ and $n$ be the index and period of $\tau$, so $\tau^{m} = \tau^{m+n}$. Lemma \ref{supports} (1.) implies that for all $x \in A^G$,
\[ \supp{a}{\tau^m(x)} \subseteq \supp{a}{\tau^{m+1}(x)} \subseteq \dots \subseteq \supp{a}{\tau^{m+n}(x)}. \]
However, since $\tau^{m} = \tau^{m+n}$, then $\supp{a}{\tau^m(x)} = \supp{a}{\tau^{m+n}(x)}$, so 
 \[ \supp{a}{\tau^m(x)} = \supp{a}{\tau^{m+1}(x)}, \quad \forall x \in A^G. \]
Hence, Lemma \ref{supports} (3.) implies that $\tau^m = \tau^{m+1}$, so, by the minimality of $n$, we must have $n = 1$.
\end{proof}

\begin{corollary}\label{order-lowerbound}
    Let $\tau: A^G\to A^G$ be a lazy CA and $n\in\mathbb{Z}_+$. Then,
   \[\ord{\tau} = \min\{n\geq 2: \tau^{n-1}=\tau^n\},\]
   where the minimum exists if and only if $\ord{\tau} < \infty$. 
\end{corollary}

We define the product of two subsets $S, K \subseteq G$ as $SK  := \{ s k : s \in S, k \in K \}$ and the inverse as $S^{-1} := \{ s^{-1} : s \in S \}$. 

 \begin{lemma}\label{le-fiber}
 For all $g \in G$, $x \in A^G$, $p \in A^S$, it holds that
        \[(g\cdot x)\vert_S = p \quad \iff \quad S_b g\subseteq \supp{b}{x}, \ \forall b \in A. \]
 \end{lemma} 
 \begin{proof}
 Suppose that $(g\cdot x)\vert_S = p$, which is equivalent to $(x)(sg) = p(s)$, $\forall s \in S$. For any $b \in A$, if $s \in S_b$, then $x(sg) = p(s) = b$. This shows that $S_b g\subseteq \supp{b}{x}$ for all $b \in A$.
 
 Conversely, suppose that $S_b g\subseteq \supp{b}{x}$, for all $b \in A$. Then, for all $s \in S$,
 \[ sg \in S_{p(s)}g \subseteq \supp{p(s)}{x}.  \] 
 This means that $x(sg) = p(s)$, for all $s \in S$, and the result follows.  
 \end{proof}

In order to state the main theorem of this section, we introduce the following notation about words on groups. A \emph{word} on $S \subseteq G$ of length $n$ is simply an element of the Cartesian power $S^{n} := \{ (s_1, \dots, s_n) : s_i \in S \}$, for $n \geq 1$ (we do not consider empty words). We consider the evaluation function $\theta$ from words on $S \subseteq G$ to elements of $G$ given by $\theta(s_1, \dots, s_n) := s_1 \dots s_n$. We say that $v$ is a \emph{subword} of $w = (s_1, \dots, s_n) \in S^n$, denoted by $v\sqsubseteq w$, if $v = (s_i, s_{i+1}, \dots, s_{j})$ for some $i \leq j$. For integers $i,j \in \mathbb{Z}$ such that $i \leq j$, we consider the integer interval $[i,j] := \{ k \in \mathbb{Z} : i \leq k \leq j \}$.  

 In the following result we provide a general upper bound for the order of a lazy cellular automaton in terms of a combinatorial condition satisfied by the fibers of its unique active transition. 

\begin{theorem}\label{upperbound}
    Let $\tau: A^G\to A^G$ be a lazy CA with minimal neighborhood $S \subseteq G$, unique active transition $p \in A^S$ and writing symbol $a \in A \setminus \{p(e)\}$. Then, $\ord{\tau}$ is at most the minimum $n\geq 2$ such that every word $w \in (S_a^{-1})^{n-1}$ has a subword $v\sqsubseteq w$ such that:
    \begin{enumerate}
        \item $\theta(v) \in S_b^{-1}S_a$, for some $b\in A\setminus\{a\}$; or \label{C1_upperbound}
        \item $\theta(v) \in S_{b_1}^{-1}S_{b_2}$, for some $b_1, b_2\in A\setminus\{a\}$, with $b_1\neq b_2$. \label{C2_upperbound}
    \end{enumerate}
\end{theorem}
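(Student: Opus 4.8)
The plan is to establish the equivalent reformulation: if $n \geq 2$ is an integer for which the stated subword property holds for every word in $(S_a)^{n-1}$, then $\tau^{n-1} = \tau^n$; by Corollary~\ref{order-lowerbound} this gives $\ord{\tau} \leq n$. So suppose, toward a contradiction, that $\tau^{n-1} \neq \tau^n$. By Corollary~\ref{cor-power} there is $x \in A^G$ such that $p$ appears in $\tau^{n-1}(x)$, and after replacing $x$ by a translate $g \cdot x$ (legitimate because $\tau$ commutes with the shift action, which is immediate from the definition of a cellular automaton) we may assume $\tau^{n-1}(x)\vert_S = p$, i.e.\ $p$ appears in $\tau^{n-1}(x)$ at $g_0 := e$.

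The first step is to build, by iterating Lemma~\ref{le-ale}, a word $(s_1, \dots, s_{n-1}) \in (S_a)^{n-1}$ together with group elements $g_k := s_k s_{k-1} \cdots s_1$ for $k = 0, 1, \dots, n-1$ (so $g_0 = e$), such that $p$ appears in $\tau^{n-1-k}(x)$ at $g_k$, that is $(g_k \cdot \tau^{n-1-k}(x))\vert_S = p$. Indeed, given this for some $k \leq n-2$, shift-equivariance yields $\tau\bigl(g_k \cdot \tau^{n-2-k}(x)\bigr) = g_k \cdot \tau^{n-1-k}(x)$, whose restriction to $S$ equals $p$; Lemma~\ref{le-ale} then produces $s_{k+1} \in S$ with $p(s_{k+1}) = a$ (hence $s_{k+1} \in S_a$) and $\bigl((s_{k+1} g_k) \cdot \tau^{n-2-k}(x)\bigr)\vert_S = p$, which is the claim for $k+1$ with $g_{k+1} := s_{k+1} g_k$.

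Next, apply the hypothesis to the word so obtained: there are indices $1 \leq i \leq j \leq n-1$ with $(s_j \cdots s_i)^{-1} \in S_b^{-1} S_a$ for some $b \neq a$, or $(s_j \cdots s_i)^{-1} \in S_{b_1}^{-1} S_{b_2}$ for some distinct $b_1, b_2 \neq a$. Since $g_j g_{i-1}^{-1} = s_j s_{j-1} \cdots s_i$, this means $g_{i-1} g_j^{-1} = t_1^{-1} t_2$ for suitable $t_1, t_2 \in S$ with $p(t_1) =: \beta_1 \neq a$ and $p(t_2) =: \beta_2$, and $\beta_1 \neq \beta_2$ in either alternative (with $\beta_2 = a$ in the first). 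Rearranging the identity, the single cell $c := t_1 g_{i-1} = t_2 g_j$ is constrained by both occurrences of $p$: from the occurrence at $g_{i-1}$ in $\tau^{\,n-i}(x)$ one reads $\tau^{\,n-i}(x)(c) = p(t_1) = \beta_1$, and from the occurrence at $g_j$ in $\tau^{\,n-1-j}(x)$ one reads $\tau^{\,n-1-j}(x)(c) = p(t_2) = \beta_2$. Since $i \leq j$ we have $n-1-j < n-i$, so the value $\beta_1$ occurs at $c$ at a strictly later time than the value $\beta_2$. But Lemma~\ref{supports}(2), applied with the symbol $\beta_1 \neq a$, gives $c \in \supp{\beta_1}{\tau^{\,n-i}(x)} \subseteq \supp{\beta_1}{\tau^{\,n-1-j}(x)}$, whence $\tau^{\,n-1-j}(x)(c) = \beta_1 \neq \beta_2$ --- a contradiction. (In the first alternative, where $\beta_2 = a$, one may instead invoke Lemma~\ref{supports}(1).) This proves $\tau^{n-1} = \tau^n$.

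I expect the only genuine labor to be the bookkeeping: checking that the iteration of Lemma~\ref{le-ale} places an occurrence of $p$ at precisely $g_k = s_k \cdots s_1$ in the $(n-1-k)$-th iterate, and then matching the group-theoretic memberships $(s_j \cdots s_i)^{-1} \in S_\bullet^{-1} S_\bullet$ with the existence of one cell $c$ lying in both neighbourhoods $S g_{i-1}$ and $S g_j$ and carrying the two prescribed, incompatible values. Once $c$ and the two times $n-i > n-1-j$ are identified, the contradiction is an immediate consequence of the monotonicity of supports in Lemma~\ref{supports}, and no analytic or combinatorial difficulty remains.
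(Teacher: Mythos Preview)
Your proof is correct and follows essentially the same approach as the paper: both assume $\tau^{n-1}\neq\tau^n$, use $G$-equivariance to place an occurrence of $p$ at $e$ in $\tau^{n-1}(x)$, iterate Lemma~\ref{le-ale} to build the word $(s_1,\dots,s_{n-1})\in(S_a)^{n-1}$ with $p$ occurring at $g_k=s_k\cdots s_1$ in $\tau^{n-1-k}(x)$, and then derive a contradiction from the support monotonicity of Lemma~\ref{supports}. The only cosmetic difference is that the paper expresses the contradiction via disjointness of certain unions of translates of the $S_b$, whereas you isolate a single cell $c=t_1g_{i-1}=t_2g_j$ carrying two incompatible values at two different times; these are the same argument unpacked at different levels of generality.
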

\begin{proof}
We will show that for any $n\geq 2$, if $\tau^{n-1}\neq \tau^n$, then there exists a word $w=(s^{-1}_1, \ldots, s^{-1}_{n-1})\in (S^{-1}_a)^{n-1}$ such that (\ref{C1_upperbound}) and (\ref{C2_upperbound}) do not hold for all subwords $v \sqsubseteq w$, so the result follows from Corollary \ref{order-lowerbound}.

If $\tau^{n-1}\neq \tau^n$, then, by Corollary \ref{cor-power}, there exists $x \in A^G$ such that $p$ appears in $\tau^{n-1}(x)$. By the $G$-equivariance of cellular automata (i.e., $\tau(g \cdot x) = g \cdot \tau(x)$, for all $g \in G$, see \cite[Prop. 1.4.4]{CSC10}), we may assume that $\tau^{n-1}(x)\vert_S = p$. Applying iteratively Lemmas \ref{le-equality} and \ref{le-ale} yields the existence of $s_1, \dots, s_{n-1} \in S_a$ such that
        \[\big((s_j\cdots s_1)\cdot \tau^{(n-1)-j}(x)\big)\vert_S = p, \quad \forall j\in [1,  n-1] .\]
      By Lemma \ref{le-fiber} applied to the previous equality we obtain that for all $b\in A$, 
            \begin{align*}
            S_b(s_j\cdots s_1) &\subseteq \supp{b}{\tau^{(n-1)-j}(x)}, \quad \forall j\in [1,  n-1],
        \end{align*}
Now, Lemmas \ref{supports} (\ref{L3-P1}) and (\ref{L3-P2}) imply that for all $k \in  [1,  n-1]$,
        \begin{align*}
           \bigcup_{j=1}^k\big(S_as_{(n-1)-j}\cdots s_1\big)&\subseteq\supp{a}{\tau^k(x)}, \\
            \bigcup_{j=k}^{n-1}\big(S_bs_{(n-1)-j}\cdots s_1\big)&\subseteq\supp{b}{\tau^k(x)}, \quad\forall b\in A\setminus\{a\}.
        \end{align*}
Since supports with respect to different elements of $A$ are disjoint, we obtain that
        \begin{align}
            \left[\bigcup_{i=1}^k\big(S_as_{(n-1)-i}\cdots s_1\big)\right]&\cap\left[\bigcup_{j=k}^{n-1}\big(S_bs_{(n-1)-j}\cdots s_1\big)\right] = \emptyset, \quad \forall b\in A\setminus\{a\}, \label{eq:C1_upperbound}\\
            \left[\bigcup_{i=k}^{n-1}\big(S_{b_1}s_{(n-1)-i}\cdots s_1\big)\right]&\cap\left[\bigcup_{j=k}^{n-1}\big(S_{b_2}s_{(n-1)-j}\cdots s_1\big)\right] = \emptyset, \quad \forall b_1, b_2\in A\setminus\{a\}, b_1\neq b_2. \label{eq:C2_upperbound}
        \end{align}
   Observe that condition \eqref{eq:C1_upperbound} is equivalent to $(s_j\cdots s_i)^{-1} = s_i^{-1} \dots s_j^{-1} \notin S_b^{-1}S_a$ for all $1 \leq i \leq j \leq n-1$ and $b \in A \setminus \{a\}$, while condition \eqref{eq:C2_upperbound} is equivalent to $(s_j\cdots s_i)^{-1} =s_i^{-1} \dots s_j^{-1} \notin S_{b_1}^{-1}S_{b_2}$, for all $1\leq i\leq j \leq n-1$, and all $b_1, b_2\in A\setminus\{a\}$, $b_1 \neq b_2$. This means that no subword of $w:= (s_1^{-1}, \dots, s_{n-1}^{-1}) \in (S_a^{-1})^{n-1}$ is neither in $S_b^{-1}S_a$ nor in $S_{b_1}^{-1}S_{b_2}$, and the result follows.
  \end{proof}

\begin{example}
Consider the the lazy CA $\tau : A^\mathbb{Z} \to A^\mathbb{Z}$ with minimal neighborhood $S = \{-1,0,1\}$, unique active transition $p = 101 \in A^S$ and writing symbol $a=1 \in A$ (this is ECA rule 236 used in Example \ref{ex-236}). As the group $\mathbb{Z}$ uses additive notation, then condition (1.) of Theorem \ref{upperbound} involves the set
\[ -S_0 + S_1 = \{0\} + \{-1,1\} = \{-1,1\}.  \] 
Since the evaluation of any word of length $1$ over $- S_1 = \{1,-1\}$ is in $-S_0 + S_1$, Theorem \ref{upperbound} shows that the order of $\tau$ is at most order $2$. Since $\tau$ may not have order $1$ because it is not the identity function, then $\ord{\tau}=2$. 
\end{example}

\begin{example}
Consider the the lazy CA $\tau : A^\mathbb{Z} \to A^\mathbb{Z}$ with minimal neighborhood $S = \{-1,0,1\}$, unique active transition $p = 001 \in A^S$ and writing symbol $a=1 \in A$ (this is ECA rule 206). Condition (1.) of Theorem \ref{upperbound} involves the set
\[ -S_0 + S_1 = \{0 , 1\} + \{1\} = \{1,2\}.  \] 
Since words of any length over $-S_1 = \{ -1\}$ only evaluate to negative integers, then Theorem \ref{upperbound} does not give us an upper bound for the order of $\tau$. It may be easily seen that $\tau$ has indeed infinite order, as the configuration $x = \dots 000100 \dots \in A^{\mathbb{Z}}$ never stabilizes. 
\end{example}

\begin{example}
Consider the the lazy CA $\tau : A^\mathbb{Z} \to A^\mathbb{Z}$ with minimal neighborhood $S = \{0,1, 3\}$, unique active transition $p = 01\_0 \in A^S$ and writing symbol $a=1 \in A$ (we use the symbol $\_$ in the notation of $p$ to emphasize that $S$ is not an interval of integers). Condition (1.) of Theorem \ref{upperbound} involves the set
\[ -S_0 + S_1 = \{0, -3\} + \{1\} = \{1,-2\}.  \]
The evaluation of the word of length $1$ over $-S_1 = \{-1\}$ is not in $-S_0 + S_1$, but the evaluation of the word of length $2$ over $-S_1 = \{-1\}$ is in $-S_0 + S_1$; hence, Theorem \ref{upperbound} shows that the order of $\tau$ is at most $3$. In Section 3 we will show that quasi-constant patterns such as $p = 01\_0$ always define lazy CA whose order achieve the upper bound given by Theorem \ref{upperbound}.  
\end{example}

As an easy consequence of Theorem \ref{upperbound} we obtain the following sufficient conditions on a lazy cellular automaton for being idempotent. 

\begin{corollary}\label{cor-idem}
    With the notation of Theorem \ref{upperbound}, assume at least one of the following holds:
    \begin{enumerate}
     \item $S_a = \emptyset$;
        \item $S_a^{-1}\subseteq S_b^{-1}S_a$, for some $b\in A\setminus\{a\}$;
        \item $S_a^{-1}\subseteq S_{b_1}^{-1}S_{b_2}$, for some $b_1, b_2\in A\setminus\{a\}$ such that $b_1\neq b_2$.
    \end{enumerate}
    Then, $\tau$ is idempotent.
\end{corollary}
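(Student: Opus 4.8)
The plan is to obtain the corollary as an immediate consequence of Theorem \ref{upperbound}: I would show that in each of the three cases the integer $n = 2$ satisfies the property defining the upper bound there, which yields $\ord{\tau}\leq 2$. Since Theorem \ref{upperbound} then provides a finite bound, Corollary \ref{order-lowerbound} applies and gives $\ord{\tau} = \min\{n\geq 2 : \tau^{n-1} = \tau^n\}$; combined with $\ord{\tau}\leq 2$ this forces $\ord{\tau} = 2$, that is, $\tau^2 = \tau$.

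First I would make explicit what the condition in Theorem \ref{upperbound} asserts for $n = 2$. A word $(s_1, \ldots, s_{n-1})\in (S_a)^{n-1}$ is then just a single element $s_1\in S_a$, and the only admissible index pair is $i = j = 1$, for which $s_j\cdots s_i = s_1$. Hence the $n = 2$ instance of the condition becomes: for every $s_1\in S_a$, either $s_1^{-1}\in S_b^{-1}S_a$ for some $b\in A\setminus\{a\}$, or $s_1^{-1}\in S_{b_1}^{-1}S_{b_2}$ for some $b_1, b_2\in A\setminus\{a\}$ with $b_1\neq b_2$.

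Next I would check the three hypotheses against this reformulation. If $S_a = \emptyset$, the universally quantified statement holds vacuously (alternatively, one sees directly that by Lemma \ref{le-ale} the pattern $p$ cannot appear in $\tau(x)$ for any $x$, so $\tau^2 = \tau$ by Corollary \ref{cor-power}). If $S_a^{-1}\subseteq S_b^{-1}S_a$ for some $b\in A\setminus\{a\}$, then every $s_1\in S_a$ satisfies $s_1^{-1}\in S_a^{-1}\subseteq S_b^{-1}S_a$, so alternative (\ref{C1_upperbound}) of Theorem \ref{upperbound} holds; similarly, if $S_a^{-1}\subseteq S_{b_1}^{-1}S_{b_2}$ for distinct $b_1, b_2\in A\setminus\{a\}$, then alternative (\ref{C2_upperbound}) holds for every $s_1\in S_a$. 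In each case $n = 2$ has the required property, so $\ord{\tau}\leq 2$ and $\tau$ is idempotent.

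I do not expect a genuine obstacle here, since the corollary is a formal specialization of Theorem \ref{upperbound} to $n = 2$. The two points that warrant care are: collapsing the index set to the single pair $i = j = 1$ correctly, so that conditions (\ref{C1_upperbound}) and (\ref{C2_upperbound}) of the theorem reduce to exactly the set containments assumed in the corollary; and using the finiteness of the bound so that Corollary \ref{order-lowerbound} upgrades $\ord{\tau}\leq 2$ to the equality $\tau^2 = \tau$ (note that for a lazy cellular automaton $\ord{\tau}\geq 2$ in any case, since $\tau$ differs from the identity on any configuration whose restriction to $S$ equals $p$).
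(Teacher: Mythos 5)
Your proposal is correct and follows exactly the route the paper intends: the paper states Corollary \ref{cor-idem} as an immediate consequence of Theorem \ref{upperbound}, and your specialization to $n=2$ (a single letter $s_1\in S_a$, index pair $i=j=1$, with the $S_a=\emptyset$ case holding vacuously) together with Corollary \ref{order-lowerbound} is precisely the omitted verification. No gaps.
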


\begin{example}
Consider the lazy CA $\tau : A^\mathbb{Z} \to A^\mathbb{Z}$ with minimal neighborhood $S = \{-1,0,1,2\}$, unique active transition $p = 1010 \in A^S$ and writing symbol $a=1 \in A$. Since $S_1 = \{-1,1\} = S_1^{-1}$, then condition (2.) of Corollary \ref{cor-idem} holds, so $\tau$ is idempotent. 
\end{example}

\begin{example}
A counterexample of the converse of Corollary \ref{cor-idem} was found by Mar\'ia G. Maga\~na-Ch\'avez in private communication. Consider the lazy CA $\tau : A^\mathbb{Z} \to A^\mathbb{Z}$ with minimal neighborhood $S = \{-3, -1, 0, 3, 4\}$, unique active transition $p = 11010 \in A^S$ and writing symbol $a=1 \in A$. Then $S_0 = \{0,4\}$ and $S_1 = \{-3,-1,3\}$, so none of the conditions of Corollary \ref{cor-idem} holds. However, $\tau$ is idempotent, as it may be checked that no pattern of $A^{S+S}$ evaluates to $p$ under the local rule of $\tau$ (see \cite[Remark 3.3]{Idem}). 
\end{example}


\section{Lazy CA with a quasi-constant active transition}

Recall that a pattern $p \in A^S$ is \emph{quasi-constant} if it is not constant and there exists an element $r \in S$, called the \emph{non-constant element of $p$}, such that $p \vert_{S \setminus \{r\}}$ is constant. 

\begin{theorem}\label{th-main}
Let $\tau : A^G \to A^G$ be a lazy cellular automaton with unique active transition $p \in A^S$ and writing symbol $a \in A \setminus \{ p(e) \}$. Assume that $p$ is quasi-constant with non-constant element $r \in S$.
\begin{enumerate}
\item If $a \neq p(s)$ for all $s \in S$, then $\ord{\tau}=2$. 

\item If $r \neq e$ and $a= p(r)$, then $\ord{\tau}$ is finite if and only if there exists $n \geq 2$ such that $r^n \in S$. Moreover, in this case, 
\[ \ord{\tau} = \min \{ n \geq 2 : r^n \in S \}.  \]

\item If $r = e$ and $a = p(s)$ for all $s \in S \setminus \{e\}$, then $\ord{\tau}$ is finite if and only if there exists $n \geq 2$ such that for all words $w\in (S^{-1} \setminus\{e\})^{n-1}$ there exists a subword $v\sqsubseteq w$ such that $\theta(v) \in S$. In such case, the order of $\tau$ is the minimum $n$ satisfying this property. 
\end{enumerate} 
\end{theorem} 

Before proving this theorem, we shall mention some of its consequences. The following corollary corresponds to Theorem 2 in \cite{Idem}.

\begin{corollary}
With the assumptions of Theorem \ref{th-main}, $\tau : A^G \to A^G$ is idempotent if and only if one of the following holds:
\begin{enumerate}
\item $a \neq p(s)$ for all $s \in S$. 

\item $r \neq e$ and $r^2 \in S$. 

\item $r = e$ and $S = S^{-1}$. 
\end{enumerate} 
\end{corollary}
\begin{proof}
Parts (1.) and (2.) are clear consequences of Theorem \ref{th-main}. Observe that part (3.) in Theorem \ref{th-main} with $n=2$ says that all $s \in S \setminus \{e\}$ satisfy $s^{-1} \in S \setminus \{e\}$, which is equivalent to $S=S^{-1}$. 
\end{proof} 

Recall that the \emph{order} of an element $g \in G$, denoted by $\ord{g}$, is the minimum integer $n \geq 1$ such that $g^n = e$, or infinity, in case that no such integer exists. Recall that $g^i \neq g^j$ holds for all $1 \leq i < j < \ord{g}$.

\begin{corollary}
Let $n \geq 2$ be an integer such that there is $g \in G$ with $\ord{g} > n$. Then, there exists a lazy CA $\tau : A^G \to A^G$ with $\ord{\tau} = n$. 
\end{corollary}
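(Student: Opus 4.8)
The plan is to construct an explicit lazy cellular automaton $\tau : A^G \to A^G$ and apply Proposition \ref{lower_bound_r!=e} to compute its order. Given an element $g \in G$ with $\ord{g} > n$, set $r := g$ and choose the neighborhood
\[ S := \{ e, r, r^2, \dots, r^{n-2}, r^n \}. \]
First I would note that, since $\ord{g} > n$, all the powers $e = r^0, r^1, \dots, r^{n-2}, r^n$ are pairwise distinct (here we use $1 \leq i < j \leq n < \ord{g}$, and also $n-2 < n$), so $|S| = n$; in particular $r^{n-1} \notin S$ because $r^{n-1} = r^j$ for some $j \in \{0, 1, \dots, n-2, n\}$ would force $n-1 \equiv j \pmod{\ord{g}}$, impossible for $\ord{g} > n$. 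Then I would take $p \in A^S$ to be the quasi-constant pattern with non-constant element $r$, defined by $p(r) = 1$ and $p(s) = 0$ for $s \in S \setminus \{r\}$ (note $r \neq e$ since $\ord{g} > n \geq 2$ forces $g \neq e$), and writing symbol $a := 1 = p(r)$. Let $\tau$ be the lazy CA with minimal local map having this unique active transition; its minimal neighborhood is indeed $S$ by the remark before Proposition \ref{period} (as $|S| = n \geq 2$).

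Next I would invoke Proposition \ref{lower_bound_r!=e}: for this $\tau$ we have $\ord{\tau} > m$ if and only if $r^j \notin S$ for all $2 \leq j \leq m$. By construction $r^2, \dots, r^{n-2} \in S$ but $r^{n-1} \notin S$, so the condition "$r^j \notin S$ for all $2 \leq j \leq m$" fails as soon as $m \geq 2$ — wait, this needs care: for $n \geq 4$ we have $r^2 \in S$, so the condition $r^j \notin S$ for $2 \leq j \leq m$ fails already at $m = 2$, which would give $\ord{\tau} \leq 2$, not $n$. So the naive choice of $S$ is wrong; the obstacle is designing $S$ so that $r^j \notin S$ for $2 \leq j \leq n-1$ but the order is exactly $n$. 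The correct choice is to make the \emph{smallest} power of $r$ landing back in $S$ be $r^n$: take
\[ S := \{ e, r \} \cup T \]
where $T$ is some finite subset of $G$ disjoint from $\{ r^2, r^3, \dots \}$-type overlaps, chosen only to pad the neighborhood if needed — but actually the cleanest fix is $S := \{e, r, r^n\}$ itself together with the observation that Proposition \ref{lower_bound_r!=e} gives $\ord{\tau} = \min\{ m \geq 2 : r^m \in S \}$. With $S = \{e, r, r^n\}$ and $\ord{g} > n$, the powers $r^2, \dots, r^{n-1}$ are all absent from $S$ (each equals $r^j$ with $j \in \{0, 1, n\}$ only if the exponent is congruent mod $\ord{g}$, impossible), while $r^n \in S$; hence $\min\{m \geq 2 : r^m \in S\} = n$, so $\ord{\tau} = n$ by Theorem \ref{th-main}(2). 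Here $|S| = 3$, which is $\geq 2$ as long as $n \geq 2$ and $e, r, r^n$ are distinct — true since $\ord g > n \geq 2$.

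The main step to get right is the distinctness/absence argument for the powers of $r$ in $S$, which is purely a statement about $\ord{g} > n$: for $0 \leq i < j$ with $j \leq n$ we have $r^i = r^j \iff \ord{g} \mid j - i$, and $0 < j - i \leq n < \ord{g}$ rules this out. This simultaneously shows $|S| = 3$ (apply with the three exponents $0, 1, n$) and $r^m \notin \{e, r, r^n\}$ for $2 \leq m \leq n-1$ (apply with exponents $m$ and each of $0, 1$, plus $m \neq n$). There is no serious obstacle beyond bookkeeping; I would also remark that this corollary, combined with groups such as $\mathbb{Z}$ (where every $g \neq e$ has infinite order) or any group with an element of order $> n$, yields the sharpness statement announced after Theorem \ref{th-main}: for every $n \geq 2$ there is a lazy CA of order exactly $n$.
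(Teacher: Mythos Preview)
Your proposal, after the self-correction, is correct and lands on exactly the construction the paper uses: take $S = \{e, g, g^n\}$, the quasi-constant pattern $p$ with $p(g)=1$ and $p(e)=p(g^n)=0$, writing symbol $1$, and then apply Theorem~\ref{th-main}(2) together with $\ord{g}>n$ to conclude $\min\{m\geq 2 : g^m\in S\}=n$. The paper's proof is the same argument presented without the false start; you might simply excise the initial attempt with $S=\{e,r,r^2,\dots,r^{n-2},r^n\}$ and begin directly with $S=\{e,g,g^n\}$.
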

\begin{proof}
Let $S := \{e, g, g^{n} \}$ and define a quasi-constant pattern $p \in A^S$ by $p(e)=p(g^n) = 0$ and $p(g)=1$. Let $\tau : A^G \to A^G$ be the lazy CA with unique active transition $p$ and writing symbol $1 \in A$. The result follows by Theorem \ref{th-main} (2.), using the assumption $\ord{g} > n$.    
\end{proof}

\begin{example}
Consider the the lazy CA $\tau : A^\mathbb{Z} \to A^\mathbb{Z}$ with minimal neighborhood $S = \{0,1,n\}$, with $n \geq 2$, unique active transition $p = 01\_\dots\_n \in A^S$ and writing symbol $a=1 \in A$. By Theorem \ref{th-main} (2.), $\ord{\tau} = n$. 
\end{example}

\begin{example}
Consider the the lazy CA $\tau : A^\mathbb{Z} \to A^\mathbb{Z}$ with minimal neighborhood $S = \{-k,0,1,k\}$, with $k \geq 2$, unique active transition $p = 0\_\dots\_10\_\dots\_0 \in A^S$ and writing symbol $a=0 \in A$. We want to find the smallest $n$ such that every word $w \in (-S \setminus \{0\})^{n-1} = \{-k,-1,k\}^{n-1}$ has a subword $v\sqsubseteq w$ such that $\theta(v) \in S$. If $k$ or $-k$ appear in the word $w$, then we may always take $v=k$ or $v=-k$, respectively. Hence, assume that $w=(-1, \dots, -1)$. It is clear that we need the length of $w$ to be at least $k$ so we can find $v\sqsubseteq w$ such that $\theta(v) = -k \in S$. Therefore, $\ord{\tau}=k+1$. 
\end{example}

In order to prove Theorem \ref{th-main}, let $\tau : A^G \to A^G$ be a lazy cellular automaton with minimal local map $\mu : A^S \to A$, unique active transition $p \in A^S$ that is quasi-constant with non-constant element $r \in S$, and writing symbol $a := \mu(p) \in A \setminus \{p(e)\}$. Without loss of generality, we assume that $p \in A^S$ is defined as follows: for all $s \in S$, 
\[ p(s) := \begin{cases} 1 & \text{ if } s= r \\
0 & \text{ if } s \neq r.
\end{cases} \] 
Because of Corollary \ref{cor-idem}, we only need to consider the case when $a  = p(s)$ for some $s\in S$ (as otherwise, $S_a = \emptyset$). The proof of Theorem \ref{th-main} is divided in two cases: when the non-constant element $r$ is different from the group identity $e$, and when they are equal. The following sections deal with each of these cases. 

\subsection{Non-constant element $r \neq e$}

In this section, we assume that the non-constant element $r \in S$ of $p \in A^S$ is different from the group identity $e \in S$. Then, we must have $1 = p(r) = a = \mu(p)$, since we are assuming that $p(e) = 0$. 


\begin{proposition}\label{lower_bound_r!=e}
    For any $n\geq 2$, $\ord{\tau} > n$ if and only if $r^j\notin S$ for all $j \in [2, n]$.
\end{proposition}
\begin{proof}
    Assume that $\ord{\tau} > n$. By Theorem \ref{upperbound}, there exists a word $w=(s^{-1}_1, \ldots, s^{-1}_{n-1})\in (S_1^{-1})^{n-1}$ such that for all $i \leq j$, 
       \[  s_i^{-1} \dots s_j^{-1} \notin S_0^{-1}S_1.  \] 
By the construction of $p$ in this section, we have $S_0 = S\setminus\{r\}$ and $S_1 = \{r\}$, so the word $w$ has the form $w = (r^{-1},r^{-1}, \dots,  r^{-1})$. Hence, the above condition is equivalent to
\[   r^{-k}\notin (S^{-1}\setminus \{r\}) r, \quad \forall k \in [1, n-1], \quad \Longleftrightarrow \quad  r^{j}\notin S, \quad \forall j \in [2, n]. \]

Conversely, assume that $r^j\notin S$ for all $j \in [2, n]$. Define $x\in A^G$ as follows:
    \[x(g) := \begin{cases}
        p(g) & \textup{ if } g\in S \\
        0 & \textup{otherwise}
    \end{cases}.\]
    \begin{claim}
        For all $m\in [0,  n-1]$, $(g \cdot \tau^m(x))\vert_S = p$ if and only if $g = r^{-m}$. 
    \end{claim}
    \begin{proof}
        We prove this by strong induction on $m$. For $m=0$, it is clear by the construction of $x$. Now, assume the existence of $m\in [0, n-2]$ such that
        \[(g \cdot \tau^\ell(x))\vert_S = p \iff g = r^{-\ell}, \quad \forall \ell\in [0, m].\]
        We shall prove the case $m+1$. By the induction hypothesis and construction of $x$, 
        \begin{equation}\label{eq1}
            \tau^{m+1}(x)(g) = \mu( (g \cdot \tau^m(x))\vert_S) =  \begin{cases}
            1 & \text{ if } g \in\{r^{-\ell}\}_{\ell = -1}^m \\
           0 & \text{ otherwise } 
            \end{cases}. 
        \end{equation}
        
        We will show that $(g \cdot \tau^{m+1})(x) \vert_S = p$ if and only if $g = r^{-(m+1)}$. First we show that $(r^{-(m+1)} \cdot \tau^{m+1})(x) \vert_S = p$. For any $s\in S$, (\ref{eq1}) implies that
        \begin{align}\label{eq4}
            \tau^{m+1}(x)(sr^{-(m+1)}) = 1 \iff sr^{-(m+1)}\in\{r^{-\ell}\}_{\ell = -1}^m. 
        \end{align}
        If $sr^{-(m+1)} = r^{-\ell}$, then $s = r^{m+1-\ell}$. Note that $m+1-\ell\in [ 1, m+2] $, where $m+2\leq n$ by the choice of $m \in [0, n-2]$. Since $r^j\notin S$ for all $j \in [2, n]$, it follows that $s = r^{m+1-\ell}$ if and only if $m + 1 - \ell = 1$. As a consequence, $\ell = m$.  Then, by \eqref{eq1} and \eqref{eq4}
        \begin{align*}
            \tau^{m+1}(x)(sr^{-(m+1)}) = 1 &\iff s = r, \\
            \tau^{m+1}(x)(sr^{-(m+1)}) = 0 &\iff s \in S\setminus\{r\}.
        \end{align*}
        Therefore, $(r^{-(m+1)}\cdot\tau^{m+1}(x))\vert_S = p$. Conversely, suppose that $(g\cdot\tau^{m+1}(x))\vert_S = p$ for some $g\in G$. It follows by the construction of $p$ that
        \begin{align*}
            \tau^{m+1}(x)(g) &= p(e) = 0, \\
            \tau^{m+1}(x)(rg) &= p(r) = 1.
        \end{align*}
        As a consequence of \eqref{eq1}, $g\notin\{r^{-\ell}\}_{\ell = -1}^m$ and $rg\in\{r^{-\ell}\}_{\ell = -1}^m$, which implies that $g\in\{r^{-\ell}\}_{\ell = 0}^{m+1}$. Thus, $g = r^{-(m+1)}$ and the result follows.
    \end{proof}
    It follows by the previous claim and Lemma \ref{le-equality} that $\tau^{n-1}\neq \tau^n$. Therefore, $\ord{\tau} > n$ by Corollary \ref{order-lowerbound} and Proposition \ref{period}.
\end{proof}

Theorem \ref{th-main} (2.) follows as a direct consequence of Proposition \ref{lower_bound_r!=e}.


\subsection{Non-constant element $r=e$}

In this section, we assume that $r=e$, so the writing symbol of $\tau$ is $a = 0 = p(s)$, for all $s \in S \setminus \{e \}$. 


\begin{proposition}\label{lower_bound_r=e}
    For any $n\geq 2$, $\ord{\tau} > n$ if and only if there exists a word $(s^{-1}_1, \dots, s^{-1}_{n-1}) \in (S^{-1}\setminus\{e\})^{n-1}$ such that 
    \[ s_i^{-1} \dots s_j^{-1} \not\in S, \quad \forall  i \leq j.  \]
\end{proposition}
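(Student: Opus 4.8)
The plan is to mimic the structure of the proof of Proposition \ref{lower_bound_r!=e}, but now exploiting the general form of Theorem \ref{upperbound} in the case $r=e$. In this setting we have $S_1 = \{e\}$ and $S_0 = S \setminus \{e\}$; since $e \notin A^{S}$ contributes nothing to the word condition (words are drawn from $(S_a)^{n-1} = (S_0)^{n-1}$ because $a = 0$), the only way condition (\ref{C1_upperbound}) of Theorem \ref{upperbound} can fail for every $i \le j$ along a word $w = (s_1,\dots,s_{n-1}) \in (S\setminus\{e\})^{n-1}$ is that $(s_j \cdots s_i)^{-1} \notin S_0^{-1} S_1 = (S\setminus\{e\})^{-1}\{e\} = (S\setminus\{e\})^{-1}$, equivalently $\theta(v)^{-1} \notin S\setminus\{e\}$ for every subword $v \sqsubseteq w$; and condition (\ref{C2_upperbound}) is vacuous since $A\setminus\{a\}$ contains only the symbol $1$, so there are no two distinct $b_1,b_2$. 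One minor point to dispatch: $(s_j\cdots s_i)^{-1}$ can never equal $e$ because each $s_k \ne e$ and, when $\ord{\tau}>n$ is being tested, the relevant products are those arising from genuine nontrivial iterations; alternatively, if $(s_j\cdots s_i)^{-1} = e$ then it trivially lies in $S$, so the stated condition $(s_j\cdots s_i)^{-1}\notin S$ already forbids this. Hence the forward direction ($\ord{\tau}>n \Rightarrow$ existence of the word) is essentially a rephrasing of Theorem \ref{upperbound} via Corollary \ref{order-lowerbound}, and I would state it as such with the translation spelled out.

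For the converse, suppose a word $w=(s_1,\dots,s_{n-1}) \in (S\setminus\{e\})^{n-1}$ satisfies $(s_j\cdots s_i)^{-1}\notin S$ for all $i\le j$. I would build an explicit witnessing configuration $x \in A^G$ and track the appearance of $p$ through the first $n-1$ iterates, just as in Proposition \ref{lower_bound_r!=e}. The natural candidate is
\[ x(g) := \begin{cases} 1 & \text{if } g = e, \\ 0 & \text{otherwise.} \end{cases} \]
Then $p$ appears in $x$ (at the group identity). The key claim to establish by strong induction on $m \in \{0,\dots,n-1\}$ is that $p$ appears in $\tau^m(x)$ — concretely, that $((s_{n-1}\cdots s_{n-m})\cdot \tau^m(x))\vert_S = p$ for the appropriate tail product of the word, with $\tau^m(x)$ having support of $1$ exactly equal to the set of group elements $\{(s_{n-1}\cdots s_{n-\ell})^{-1} : 0 \le \ell \le m\}$ (reading the empty product as $e$). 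The inductive step computes $\tau^{m+1}(x)(g)$ from the local rule: $\tau^{m+1}(x)(g) = 1$ iff either $g$ was already in the support of $1$ of $\tau^m(x)$, or $(g\cdot\tau^m(x))\vert_S = p$, i.e. $g$ is the new element $(s_{n-1}\cdots s_{n-(m+1)})^{-1}$. The hypothesis $(s_j\cdots s_i)^{-1}\notin S$ for all subwords is exactly what guarantees that these newly-activated elements are all distinct from each other and that no spurious occurrence of $p$ is created or destroyed prematurely — this is where the combinatorial condition does its work. Combined with Lemma \ref{supports} (which already tells us the support of $1$ only grows), one concludes $p$ appears in each of $\tau^0(x),\dots,\tau^{n-1}(x)$, hence $\tau^{j-1}\ne\tau^j$ for $j\in\{1,\dots,n\}$ by Corollary \ref{cor-power} (or Lemma \ref{le-equality}), so $\ord{\tau}>n$ by Corollary \ref{order-lowerbound}.

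The main obstacle I anticipate is the bookkeeping in the inductive step of the claim: unlike the cyclic case $r\ne e$ where the support of $1$ is a clean chain $\{e, r^{-1}, \dots, r^{-m}\}$, here the support is $\{(s_{n-1}\cdots s_{n-\ell})^{-1}\}_{\ell}$ and I must carefully verify that (a) these elements are pairwise distinct (otherwise the "count" of new activations is wrong), and (b) for each $s\in S\setminus\{e\}$, the element $s\cdot(s_{n-1}\cdots s_{n-(m+1)})^{-1}$ lands in the support of $1$ of $\tau^m(x)$ precisely when $s = e$ fails and the right tail relation holds — this requires translating "$s (s_{n-1}\cdots s_{n-(m+1)})^{-1} = (s_{n-1}\cdots s_{n-\ell})^{-1}$ forces $\ell = m+1$ and $s=e$, impossible" back into the subword condition $(s_j\cdots s_i)^{-1}\notin S$. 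I would handle this by indexing the subwords explicitly and checking that any collision would exhibit a subword product whose inverse lies in $S$, contradicting the hypothesis. Everything else is routine given Lemmas \ref{le-equality}, \ref{le-ale}, \ref{supports} and Corollaries \ref{cor-power}, \ref{order-lowerbound}, which are already in place.
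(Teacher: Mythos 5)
Your forward direction is essentially the paper's: specialize Theorem \ref{upperbound} with $S_a=S_0=S\setminus\{e\}$, $S_1=\{e\}$ and $S_b=\emptyset$ for $b\notin\{0,1\}$, so that condition (2) is vacuous and condition (1) reduces to membership in $S_1^{-1}S_0=S\setminus\{e\}$ (note your computation $S_0^{-1}S_1=(S\setminus\{e\})^{-1}$ has the factors in the wrong order, although the translation you finally state is the correct one). Your way of dismissing the possibility $(s_j\cdots s_i)^{-1}=e$ does not work, however: a product of non-identity elements can perfectly well equal $e$, and your ``alternative'' argument assumes the very condition $(s_j\cdots s_i)^{-1}\notin S$ that is to be established. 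This point needs (and admits) a short argument -- e.g.\ if $s_j\cdots s_i=e$ then the word produced in the proof of Theorem \ref{upperbound} would force $p$ to appear at the same cell at two different times, contradicting the monotonicity of $\supp{a}{\cdot}$ from Lemma \ref{supports} -- though the paper itself is terse here, so this is a minor issue.

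The genuine gap is in the converse. In this case the writing symbol $a$ equals the constant value of $p$ on $S\setminus\{e\}$ (in the chosen normalization $a=0$, $p(e)=1$), so by Lemma \ref{supports} it is the set of $a$'s that can only grow: wherever $p$ appears, a cell carrying $p(e)$ is overwritten by $a$, and no cell ever acquires the value $p(e)$. Your configuration ($1$ at $e$, $0$ elsewhere) therefore admits exactly one appearance of $p$, and $\tau(x)$ is the constant-$0$ configuration, which is a fixed point; it only witnesses $\ord{\tau}>1$. Your inductive claim that ``the support of $1$ only grows'' and that new appearances are created at the points $(s_{n-1}\cdots s_{n-\ell})^{-1}$ inverts the dynamics, and no single-seed configuration can be repaired to work, since one non-$a$ cell can sustain at most one step of activity. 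What the proof actually requires -- and what the paper does -- is to pre-place $n$ cells carrying the value $p(e)$ along the trajectory $e,\ s_1,\ s_2s_1,\ \dots,\ s_{n-1}\cdots s_1$ determined by the word, with background $a$ elsewhere; the hypothesis $(s_j\cdots s_i)^{-1}\notin S$ for all $i\leq j$ is then exactly what guarantees that these cells are consumed one per time step, so that $p$ appears in each of $x,\tau(x),\dots,\tau^{n-1}(x)$ at the successive trajectory points $s_{n-1}\cdots s_1,\ s_{n-2}\cdots s_1,\ \dots,\ e$, giving $\tau^{j-1}\neq\tau^j$ for $j=1,\dots,n$ and hence $\ord{\tau}>n$. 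This trajectory construction is the essential content of the ``if'' direction and is missing from your proposal.
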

\begin{proof}
    Suppose that $\ord{\tau} > n$. By Theorem \ref{upperbound}, there exists a word $w=(s^{-1}_1, \dots, s^{-1}_{n-1})\in (S_0^{-1})^{n-1}$ such that for all $i \leq j$, 
       \[  s_{i}^{-1} \dots s_j^{-1} \notin S_1^{-1}S_0.  \] 
By the construction of $p$ in this section, we have that $S_0 = S \setminus \{e\} $ and $S_1 = \{e\}$, so $s_{i}^{-1} \dots s_j^{-1} \notin S \setminus \{e\}$ for all $i \leq j$. As $s^{-1}_i \in S^{-1}\setminus\{e\}$, for all $i$, it is not possible that $s_i^{-1} = e$. If $s_{i}^{-1} \dots s_j^{-1} = e$ for some $i < j$, then $s_{i+1}^{-1} \dots s_{j}^{-1} = s_i \in S \setminus \{e\}$ which is a contradiction. Therefore, $s_{i}^{-1} \dots s_j^{-1} \notin S$ for all $i \leq j$, and the direct implication follows. 
    
 Conversely, assume that there exists a word $(s^{-1}_1, \dots, s^{-1}_{n-1}) \in (S^{-1}\setminus\{e\})^{n-1}$ such that 
    \[ s_{i}^{-1} \dots s_j^{-1} \not\in S, \quad \forall  i \leq j.  \]
  Define $x\in A^G$ as follows: for all $g\in G$, 
    \[x(g) := \begin{cases}
        1 & \textup{ if } g\in\{ s_i\cdots s_1 s_0 \}_{i=0}^{n-1} \\
        0 & \textup{otherwise}
    \end{cases}\]
    where $s_0 = e$. 

      \begin{claim}\label{sup2_r=e}
        For all $i\in [0, n-1]$, $(g\cdot \tau^{i}(x))\vert_S = p$ if and only if $g = s_{n-(i + 1)}\cdots s_1s_0$.
    \end{claim}
    \begin{proof}
   First of all, observe that the hypothesis  $s_i^{-1} \dots s_j^{-1} \not\in S$, for all $ i \leq j$ implies that 
   \begin{equation}\label{sup1_r=e}
   \{s_{n-j}\cdots s_1s_0\}_{j = k}^{n}\cap \bigcup_{i = 1}^{k}\big((S\setminus\{e\})s_{n-i}\cdots s_1s_0\big) = \emptyset,  \quad \forall k \in [1, n]. 
   \end{equation}
   Indeed, if $s_{n-j}\cdots s_1s_0 \in (S\setminus\{e\})s_{n-i}\cdots s_1s_0$, then $s_{n-j+1}^{-1} \dots s_{n-i}^{-1} \in S \setminus \{e\}$, contradicting the hypothesis.  
   
 We prove Claim \ref{sup2_r=e} by strong induction on $i$. For the base case $i = 0$, fix $k = 1$ in (\ref{sup1_r=e}), so we see that $x(ss_{n-1}\cdots s_1) = 0$ for all $s\in S\setminus\{e\}$ and $x(s_{n-1}\cdots s_1) = 1$ by the construction of $x$. It follows that $(s_{n-1}\cdots s_1\cdot x)\vert_S = p$. Conversely, assume that $(g\cdot x)\vert_S = p$ for some $g\in G$, so $x(g) = 1$ and $x(sg) = 0$ for all $s\in S\setminus\{e\}$. If follows by the construction of $p$ that $g = s_{\ell}\cdots s_1s_0$ for some $\ell\in [0, n-1]$ and $s_j\cdots s_1s_0\notin (S\setminus\{e\})g$ for all $j\in [0, n-1]$. If $\ell<n-1$,
        \[s_{\ell+1} = (s_{\ell+1}\cdots s_1s_0)(s_0^{-1} s_{1}^{-1} \dots s_\ell^{-1})\notin S\setminus\{e\},\]
        which contradicts that $s_{\ell+1}\in S\setminus\{e\}$. Therefore, $g = s_{n-1}\cdots s_1s_0$. 
        
  For the induction hypothesis, assume the existence of $m \in [ 0, n-2]$ such that the following holds 
        \[(g\cdot \tau^{i}(x))\vert_S = p \iff g = s_{n-(i+1)}\cdots s_1s_0, \quad \forall i \in [0, m].\]
        We shall prove the case $m+1$ by showing that $(g\cdot \tau^{m+1}(x))\vert_S = p$ if and only if $g = s_{n-(m+2)}\cdots s_1s_0$. By the construction of $x$, the induction hypothesis, and Definition \ref{lazy},
        \[\tau^{m}(x)(g) = \begin{cases}
            1 & \textup{ if } g\in\{s_i\cdots s_1s_0\}_{i = 0}^{n-(m+1)} \\
            0 & \textup{otherwise}
        \end{cases}.\]
        Since $(g\cdot \tau^{m}(x))\vert_S = p$ if and only if $g = s_{n-(m+1)}\cdots s_1s_0$, it follows that $\tau^{m}(x)(g)\neq \tau^{m+1}(x)(g)$ if and only if $g = s_{n-(m+1)}\cdots s_1s_0$. Then
        \begin{align}\label{eq3}
            \tau^{m+1}(x)(g) = \begin{cases}
            1 & \textup{ if } g\in\{s_i\cdots s_1s_0\}_{i = 0}^{n-(m+2)} \\
            0 & \textup{otherwise}
        \end{cases}.
        \end{align}
        As a consequence of fixing $k = m+2$ in (\ref{sup1_r=e}),
        \[s_{j}\cdots s_1s_0\notin (S\setminus\{e\})s_{n-(m+2)}\cdots s_1s_0, \quad \forall j \in [ 0, n-(m+2)].\]
        Thus, $\tau^{m+1}(x)(ss_{n-(m+2)}\cdots s_1s_0) = 0$ for all $s\in S\setminus\{e\}$. Since $\tau^{m+1}(x)(s_{n-(m+2)}\cdots s_1s_0) = 1$, it follows that
        \[\big(s_{n-{(m+2)}}\cdots s_1s_0\cdot\tau^{m+1}(x)\big)\vert_S = p.\]
      Conversely, assume that $(g\cdot\tau^{m+1})(x)\vert_S = p$ for some $g\in G$. Observe that $\tau^{m+1}(x)(g) = 1$ and $\tau^{m+1}(x)(sg) = 0$ for all $s\in S\setminus\{e\}$. By (\ref{eq3}), we have  that $g = s_{\ell}\cdots s_1s_0$ for some $\ell\in [ 0, n - (m+2)]$ and $s_j\cdots s_1s_0\notin (S\setminus\{e\})g$ for all $j \in [ 0, n - (m+2)]$. If $\ell<n - (m+2)$, then
        \[s_{\ell+1} = (s_{\ell+1}\cdots s_1s_0)(s_0^{-1} s_1^{-1} \dots s_{\ell}^{-1}) \notin S\setminus\{e\},\]
        which contradicts that $s_{\ell+1}\in S\setminus\{e\}$. Therefore, $g = s_{n-(m+2)}\cdots s_1s_0$, and the claim follows.
    \end{proof}
    The previous claim and Corollary \ref{cor-power} imply that $\tau^{n-1}\neq \tau^{n}$. The result follows by Corollary \ref{order-lowerbound} and Proposition \ref{period}.
\end{proof}

Finally, Theorem \ref{th-main} (3.) follows as a direct consequence of Proposition \ref{lower_bound_r=e}.


\section{Open problems}

In this section, we propose two open problems related to the study of lazy CAs. 

In this paper, Corollary \ref{cor-idem} gives a sufficient condition for the idempotency of a lazy cellular automaton. In \cite{One-dim}, Conjecture 4.1 proposes that a lazy cellular automaton $\tau : A^G \to A^G$ with a unique active transition $p \in A^S$ is not idempotent if and only if $p$ satisfies a self-overlapping condition. The conjecture is true when $S \subseteq \mathbb{Z}$ is an interval, but it turns out that the direct implication fails even for general one-dimensional lazy cellular automata. Hence, we propose our first problem. 

\begin{problem}
Characterize the idempotency of a lazy cellular automaton $\tau : A^G \to A^G$ in terms of its unique active transition $p \in A^S$ and writing symbol $a \in A \setminus \{p(e)\}$.  
\end{problem}

The second problem we propose comes from the theory of monoids. Since the composition of two cellular automata over $A^G$ is again a cellular automaton over $A^G$, the set of all cellular automata over $A^G$, denoted by $\text{CA}(G,A)$ or $\text{End}(A^G)$, is a monoid equipped with the composition of functions. The group of all invertible cellular automata over $A^G$ is denoted by $\text{ICA}(G,A)$ or $\text{Aut}(A^G)$. For any subset $\mathcal{C}$ of cellular automata over $A^G$, denote by $\langle \mathcal{C} \rangle$ the submonoid of $\text{CA}(G,A)$ generated by $\mathcal{C}$.

It is well-known that the full transformation monoid $\text{Tran}(A)$ of a finite set $A$ is generated by the idempotents of defect $1$ (i.e., self-maps of $A$ whose image has size $\vert A \vert -1$) together with the group of invertible transformations $\text{Sym}(A)$ \cite{Howie}. As the minimal local maps of lazy cellular automata over $A^G$ with minimal neighborhood $S=\{e\}$ are precisely the idempotents of defect $1$ of $A$, the previous result inspired the following problem.  

\begin{problem}\label{prob2}
If $\mathcal{L}(G,A)$ is the set of all lazy cellular automata over $A^G$, prove or disprove the following:
\[ \text{CA}(G,A) = \langle \text{ICA}(G,A) \cup \mathcal{L}(G,A) \rangle.   \] 
If the above does not hold, what can we say of the submonoids $\langle \text{ICA}(G,A) \cup \mathcal{L}(G,A) \rangle$ and $\langle \mathcal{L}(G,A) \rangle$?
\end{problem}

In other words, Problem \ref{prob2} asks if every cellular automaton over $A^G$ may be written as a composition of lazy and invertible CAs. 

\begin{acknowledgments}
The first author was supported by SECIHTI \textit{Becas nacionales para estudios de posgrados.} We sincerely thank Nazim Fat\`es for suggesting the name \emph{lazy} for the class of cellular automata studied in this paper during the 31st International Workshop on Cellular Automata and Discrete Complex Systems AUTOMATA 2025, at the University of Lille, France. 

\end{acknowledgments}

\bibliographystyle{splncs04}

\end{document}